\documentclass[12pt]{amsart}

\usepackage{amscd} 
\usepackage{amsmath,amssymb}
\usepackage{mathtools}
\usepackage{mathrsfs}
\textwidth=165mm \textheight=220mm \hoffset=-20mm \voffset=-5mm

\newtheorem{theorem}{Theorem}[section]
\newtheorem{lemma}[theorem]{Lemma}
 
\theoremstyle{definition}

\newtheorem{remark}[theorem]{Remark} 
\numberwithin{equation}{section}

\def\z*{\bar z}

\def\dom{\text{\rm dom}}
\def\ran{\text{\rm ran}}

\def\RE{\mathbb R}
\def\CO{{\mathbb C}}

\def\ph*{\phi_\Diamond}

\def\be{\begin{equation}}
\def\ee{\end{equation}}
\def\min{{\rm min}}

\def\comp{{\rm comp}}

\def\-{{\rm in}}
\def\+{{\rm ex}}

\def\bou{{\mathscr B}}

\begin{document}\title[Inverse wave scattering in the time  domain for point scatterers]
{Inverse wave scattering in the time  domain for point scatterers}
\author{Andrea Mantile}
\author{Andrea Posilicano}

\address{Laboratoire de Math\'ematiques de Reims, UMR9008 CNRS et Universit\'e de Reims Champagne-Ardenne, Moulin de la Housse BP 1039, 51687 Reims, France}
\address{DiSAT, Sezione di Matematica, Universit\`a dell'Insubria, via Valleggio 11, I-22100
Como, Italy}
\email{andrea.mantile@univ-reims.fr}
\email{andrea.posilicano@uninsubria.it}


\begin{abstract} 
Let $\Delta_{\alpha,Y}$ be the bounded from above self-adjoint realization in $L^{2}({\mathbb R}^{3})$ of the Laplacian with $n$ point scatterers  placed at  $Y=\{y_{1},\dots,y_{n}\}\subset{\mathbb R}^{3}$, the parameters $(\alpha_{1},\dots\alpha_{n})\equiv\alpha\in {\mathbb R}^{n}$ being related to the scattering properties of the obstacles. Let $u^{\alpha,Y}_{f_{\epsilon}}$ and $u^{\varnothing}_{f_{\epsilon}}$ denote the solutions of the wave equations corresponding  to $\Delta_{\alpha,Y}$ and to the free Laplacian $\Delta$ respectively, with a source term given by a pulse $f_{\epsilon}$ supported in $\epsilon$-neighborhoods of the points in $X_{N}=\{x_{1},\dots, x_{N}\}$,  $X_{N}\cap Y=\varnothing$. We show that, for any fixed $\lambda>\sup\sigma(\Delta_{\alpha,Y})\ge 0$, there exists $N_{\circ}\ge 1$ such that  the locations of the points in $Y$ can be determined by the knowledge of a finite-dimensional scattering data operator $F^{N}_{\lambda}:{\mathbb R}^{N}\to{\mathbb R}^{N}$, $N\ge N_{\circ}$. Such an operator is defined in terms of the limit as $\epsilon\searrow 0$ of the Laplace transform of $u^{\alpha,Y}_{f_{\epsilon}}(t,x_{k})-u^{\varnothing}_{f_{\epsilon}}(t,x_{k})$, $k=1,\dots, N$.
We exploit the factorized form of the resolvent difference 
$(-\Delta_{\alpha,Y}+\lambda)^{-1}-(-\Delta+\lambda)^{-1}$ and a variation on the finite-dimensional factorization in the MUSIC algorithm. Multiple scattering effects are not neglected; our model can be interpreted as the time-domain version of a frequency-domain scattering from an array of Foldy's point-like obstacles.\end{abstract}
\maketitle
\section{Introduction.}
Given the finite set $Y=\{y_{1},\dots,y_{n}\}\subset \RE^{3}$ and $0<r\ll 1$, let 
$$
\partial_{tt}u=
\Delta_{Y^r}u
$$ 
be the wave equation describing the propagation of acoustic waves in the inhomogeneous medium made of a homogeneous  one containing the array of $n$ small spherical obstacles 
$$
{Y^r}=B^{1}_{r}\cup\dots\cup B_{r}^{n}\,,\qquad B_{r}^{i}=\{x\in\RE^{3}:|x-y_{i}|< r\}\,. 
$$
More precisely, $\Delta_{Y^r}$ is the self-adjoint realization in $L^{2}(\RE^{3})$ of the Laplacian with boundary conditions 
\be\label{BC}
\gamma_{0}^{i}u+\alpha_{i}(r)\, [\gamma^{i}_{1}]u=0\,,\quad i=1,\dots, n\,,
\ee
at the boundaries $S^{i}_{r}=\{x\in\RE^{3}:|x-y_{i}|=r\}$. Here $\gamma^{i}_{0}$ and $[\gamma^{i}_{1}]$,  denote the Dirichlet trace at $S^{i}_{r}$ and the jump across $S^{i}_{r}$ of 
 the Neumann trace $\gamma^{i}_{1}$ respectively; $\alpha_{1}(r), \dots, \alpha_{n}(r)$ are $r$-dependent parameters to be specified later.\par
In our previous work \cite{PAMS}, we considered inverse wave scattering in the time domain for a wide class of self-adjoint Laplacians, including those with hard, soft and semi-transparent bounded obstacles with Lipschitz boundaries.  By applying to $\Delta_{Y^r}$ the results there provided  (which build on our previous works \cite{JDE}, \cite{JST}, \cite{JMPA}, \cite{IP}), one gets the following: denoting by $u^{Y^r}_{f}$ and $u^{\varnothing}_{f}$ the solutions of the wave equations corresponding  to $\Delta_{Y^r}$ and to the free Laplacian $\Delta$ respectively, with a source term $f$ concentrated at time $t=0$ (a pulse) one has that for any fixed $\lambda\ge \lambda_{\circ}> 0$ and any fixed open set $B\subset\subset{\mathbb R}^{n}\backslash\overline Y_{\!r}$, the obstacle $Y^r$ can be reconstructed by the knowledge of the data operator $F^{{Y^r}\!\!,B}_{\lambda}:L^{2}(B)\to L^{2}(B)$, 
\be\label{FB}
F^{{Y^r}\!\!,B}_{\lambda}f:=\int_{0}^{\infty}\!\!e^{-\sqrt\lambda\,t}\,1_{B}\big(u^{Y^r}_{f}(t,\cdot)-u^{\varnothing}_{f}(t,\cdot)\big)1_{B}\,dt\,,\qquad \mbox{supp}(f)\subset B\,.
\ee
Since the choice of the set $B$ where both the source and the detector are placed is arbitrary (beside the constraint $B\cap\overline {Y^r}=\varnothing)$, one is lead to choose $B$ having the same kind of shape as $Y^r$, i.e., $B=X^{\epsilon}$, where $X^{\epsilon}$ denotes the $\epsilon$-neighborhood of a set $X=\{x_{1},\dots,x_{N}\}$  such that $X\cap Y=\varnothing$.
Thus, given  $X$ and $Y$, once the parameters  $\alpha_{i}(r)$ and $\lambda$ have been fixed, the data operator $F_{\lambda}^{{Y^r}\!\!,X^{\epsilon}}$ in \eqref{FB} depends on $r$ and $\epsilon$ alone and a natural question arises: what happens whenever $r\searrow 0$ and $\epsilon\searrow 0$ ? In more detail:
\vskip6pt
1) is there a well defined limit self-adjoint operator $\Delta_{\alpha,Y}$ describing the propagation of acoustic waves in an otherwise homogeneous medium containing an array $Y$ of point scatterers ?
\vskip6pt
2) is such an array  $Y$ determined by a finite-dimensional scattering data operator 
$F^{X}_{\lambda}$ corresponding to $X$ and to the wave dynamics generated by $\Delta_{\alpha,Y}$ ? 
\vskip6pt\noindent 
The answer to the first question has been known from a long time:  by \cite[Theorem 2]{FT} (see also \cite[Lemma 2.2]{AGS} and \cite[Theorems 3.4 and 3.7]{Shi} for the case of a single sphere),  setting 
\be\label{scaling}
\alpha_{i}(r)=r+4\pi\alpha_{i}\,r^{2}+o(r^{2})\,,\qquad \alpha\equiv(\alpha_{1},\dots\alpha_{n})\in \RE^{n}\,,\ee 
in \eqref{BC},  $\Delta_{Y^r}$ converges as $r\searrow 0$ (in strong resolvent sense)  to a well defined self-adjoint, bounded from above, operator $\Delta_{\alpha,Y}$. Such an operator $\Delta_{\alpha,Y}$ was firstly rigorously defined in the seminal paper \cite{BF} as a self-adjoint extension of the Laplacian restricted to smooth functions with compact support disjoint from $Y$.  Since then it attracted an increasing attention and has been used in a wide range of applications: we refer to the huge list of references in [1], the main text devoted to this operator and its ramifications. In next Section 2 we will recall the definition of $\Delta_{\alpha,Y}$ and describe its main properties. \par
Although the origin of $\Delta_{\alpha,Y}$ has its root in Quantum Mechanics, as well as most of its applications (however see \cite{NP98} for its connections with electrodynamics of point particles), in recents years it has been used to provide a rigorous mathematical framework for Foldy's scattering of time-harmonic acoustic and elastic waves, see \cite{HMS} and \cite{HMST}.  
While Foldy's approach considers wave scattering  in the frequency domain  (see \cite{Foldy}, \cite[Section 8.3]{Martin}), our aim here is to 
work in the time domain. 
\par
Let us point out that the scaling \eqref{scaling} entering into the boundary conditions \eqref{BC} is the only one leading to a not trivial limit dynamic, i.e., a dynamic different from propagation of free waves in the whole space. This is reminiscent of the case in which one approximates points scatterers with a scaled potential, where the limit dynamics is not trivial if and only if the unscaled potential has a zero-energy resonance (see \cite[Section I.I.2]{AGHKH}). There is an analogous phenomenon whenever one approximate a point scatterer with an obstacle modeled by a shrinking sphere: the boundary conditions \eqref{BC} together with \eqref{scaling} provide a zero-energy resonance (see \cite[Theorems 3.7]{Shi}); different boundary conditions lead, in the limit $r\searrow 0$, to the free Laplacian. For example, whenever one consider Dirichlet boundary conditions on an array of shrinking spheres, one gets an expansion (w.r.t. the radius $r\ll 1$) of the scattered waves which contains no zero-order term (see \cite{SWY}; notice that the coefficients $C_{j}$ appearing there in the expansion (1.7)  are proportional to the radii of the shrinking spheres). 
\par
Taking into account  the limit $r\searrow 0$, one has then at disposal the well defined data operator $F_{\lambda}^{X^{\epsilon}}:L^{2}(X^{\epsilon})\to L^{2}(X^{\epsilon})$,
$$
F_{\lambda}^{X^{\epsilon}}f:=\lim_{r\searrow 0}F_{\lambda}^{{Y^r}\!\!,X^{\epsilon}}f=\int_{0}^{\infty}e^{-\sqrt\lambda\,t}\,1_{X^{\epsilon}}\big(u^{\alpha,Y}_{f}(t,\cdot)-u^{\varnothing}_{f}(t,\cdot)\big)1_{X^{\epsilon}}\,dt\,,\qquad \mbox{supp}(f)\subset X^{\epsilon}\,.
$$
where $u^{\alpha,Y}_{f}$ denotes the solutions of the wave equation corresponding  to $\Delta_{\alpha,Y}$ with source term given by the pulse $f$ (see Remark \ref{WD}).\par
As we will recall in Subsection \ref{MS} below, a relevant point  is the fact that the limit wave equation generated by $\Delta_{\alpha,Y}$, i.e., 
\be\label{we}
\partial_{tt}u=
\Delta_{\alpha,Y}u
\ee 
 can be recasted into the distributional form 
\be\label{es}
\partial_{tt} u=\Delta u+\sum_{i=1}^{n}q_{i}(t)\, \delta_{y_{i}}\,.
\ee
Here $\delta_{y_{i}}$ denotes Dirac's delta distribution at $y_{i}$ and the $q_{i}(t)$'s evolve according to a first-order retarded differential equation (see \cite{KP}, \cite{NP05} and \eqref{ret} below); the terms containing the retardation provide the contributions due to multiple scattering. \par 
Let us point out that, for a time-harmonic solution $u(t,x)=e^{-it\kappa}u_{\kappa}(x)$, the equation \eqref{we} recasts into the generalized eigenfunctions problem
$$
(\Delta_{\alpha,Y}+\kappa^{2})u_\kappa=0\,,
$$ 
whose solutions are known to be of the kind (see, e.g., \cite[eq. (1.5.1), Section II.1.5]{AGHKH})
\be\label{sergio}
u_\kappa(x)=u_{\kappa}^{in}(x)+\frac1{4\pi}\sum_{1\le j,\ell\le n}\Lambda^{j\ell}_{-\kappa^{2}+i0}\,u_{\kappa}^{in}(y_{\ell})\,\frac{e^{i\kappa |x-y_{j}|}}{|x-y_{j}|}\,,
\ee
where $u_{\kappa}^{in}$ solves the free Helmholtz equation, while $\Lambda^{j\ell}_{-\kappa^{2}+i0}:=\lim_{\varepsilon\searrow 0} \Lambda^{j\ell}_{-\kappa^{2}+i\varepsilon}$ and $\Lambda_{\zeta}$ is the inverse of the matrix \eqref{matrix M} defined in the next Section. Since 
$$
\Lambda_{-\kappa^{2}+i0}=\text{Diag}(\alpha_{j}-i\kappa/4\pi)A_{\kappa}\,,
$$ 
where the matrix  $A_{\kappa}$ is defined as in \cite[page 7, Section 2.3.1]{CS}, formula \eqref{sergio} describes the scattering by an array $Y$ of Foldy's point scatterers having scatterings coefficients 
$$
g_{j}=\frac1{\alpha_{j}-\frac{i\kappa}{4\pi}}
$$ 
(see \cite[eqs. (2.9) and (2.19)]{CS}). \par 
The aim of the present paper is to give a positive answer to the second question. We show that the limit data operator $F^{X}_{\lambda}:=\lim_{\epsilon\searrow 0} F^{X^{\epsilon}}_{\lambda}$ is a well defined map on $\RE^{N}$ to itself (see Lemma \ref{LF} for the complete result). Moreover, denoting by 
$P^{X}_{\lambda}$ the orthogonal projector onto 
$\ker(F^{X}_{\lambda})$ and by $\phi_{\lambda}^{X}(z)\in \RE^{N}$  the vector with components 
$$
\big(\phi_{\lambda}^{X}(z)\big)_{k}=\frac{e^{-\sqrt\lambda\,|x_{k}-z|}}{|x_{k}-z|}\,, \qquad z\in \RE^{3}\backslash X\,,\quad x_{k}\in X\,,
$$ 
we show in Theorem \ref{thm} (to which we refer for the precise statement) that the set $Y$ is determined according to the relation
$$Y=\big\{\text{peak points of the function $\RE^{3}\backslash X\ni z\mapsto |P_{\lambda}^{X}\phi^{X}_{\lambda}(z)|^{-1}$}\big\}\,.
$$ 
Our results can be read as a time-domain analogue of the inverse scattering by point-like scatterers in the Foldy regime studied in \cite[Section 2.3.1]{CS}; they provide the counterpart, in the case of point scatterers, of our previous results (see \cite{PAMS}) on time-domain inverse scattering for extended obstacles.\par  
The main ingredients in our proofs are the factorized form of the resolvent difference $(-\Delta_{\alpha,Y}+\lambda)^{-1}-(-\Delta+\lambda)^{-1}$ and a variation on the factorization method approach to the MUSIC (MUltiple-SIgnal-Classification) algorithm provided by Kirsch in \cite[Section 2]{K2002} (see also \cite[Section 4.1]{KG}); however here, contrarily to the frequency-domain case treated by Kirsch, the multiple scattering effects are not neglected. 

\section{Laplacians with point scatterers.}
Given $Y=\{y_{1},\dots, y_{n}\}\subset\RE^{3}$, to any $\alpha\equiv(\alpha_{1},\dots\alpha_{n})\in \RE^{n}$ there corresponds  the  self-adjoint realization in $L^{2}(\RE^{3})$ of the Laplacian with $n$ point scatterers $y_{1},\dots, y_{n}$ defined by
\begin{align*}
\dom(\Delta_{\alpha,Y})=\bigg\{&u\in L^{2}(\RE^{3}):u(x)=u_{0}(x)+\frac1{4\pi}\sum_{j=1}^{n}\frac{\xi_{j}}{|x-y_{j}|}\,,\ u_{0}\in \dot H^{2}(\RE^{3})\,,\\ 
&\xi\equiv(\xi_{1},\dots,\xi_{n})\in\CO^{n}\,,\ \lim_{x\to y_{j}}\left(u(x)-\frac1{4\pi}\,\frac{\xi_{j}}{|x-y_{j}|}\right)=\alpha_{j}\xi_{j}\bigg\}\,,
\end{align*}
$$
\Delta_{\alpha,Y}:\dom(\Delta_{\alpha,Y})\subset L^{2}(\RE^{3})\to L^{2}(\RE^{3})\,,\qquad
\Delta_{\alpha,Y}u:=\Delta u_{0}.
$$
We refer to \cite[Chapter II.1]{AGHKH} for more details and proofs. Here the homogeneous Sobolev space of order two $\dot H^{2}(\RE^{3})$ is defined by 
$$
\dot H^{2}(\RE^{3}):=\big\{u\in{\mathscr C}_{b}(\RE^{3}):|\nabla u|\in L^{2}(\RE^{3})\,,\ \Delta u\in L^{2}(\RE^{3})\big\}
$$
and its relation with the usual Sobolev space of order two $H^{2}(\RE^{3})$ is given by $H^{2}(\RE^{3})=\dot H^{2}(\RE^{3})\cap L^{2}(\RE^{3})$.\par
The operator $\Delta_{\alpha,Y}$ belongs to the set of self-adjoint extensions of the symmetric one 
$S_{Y}$ given by the restriction of the free Laplacian to functions vanishing at the points in $Y$, i.e., 
$S_{Y}:=\Delta|{\mathscr C}^{\infty}_{comp}(\RE^{3}\backslash Y)$; the vector $\alpha\in \RE^{n}$ plays the role of the extension parameter. \par
The extensions of the kind $\Delta_{\alpha,Y}$ suffice for the description of the relevant physical models: by \cite[Theorem 4]{KP}, the wave equation $\partial_{tt}u=Au$ corresponding to a self-adjoint extension $S_{Y}\subset A\subset S_{Y}^{*}$ has a finite speed of propagation if and only if $A=\Delta_{\alpha,Y}$ for some $\alpha\in \RE^{n}$. Moreover, finite speed of propagation holds if and only if the boundary conditions at $Y$ specifying the self-adjointness domain are of local type, i.e., they do not couple scatterers placed at different points: the scatterers are independent of each other.      \par
The vector $\alpha\in \RE^{n}$, beside specifying  the boundary conditions at $Y$, 
is related to the scattering length $a$ of the scatterers through the relation $a=-{(4\pi)^{-1}}\sum_{i=1}^{n}\alpha^{-1}_{i}$ (see \cite[Section II.1.5]{AGHKH}).\par 
The resolvent of $\Delta_{\alpha,Y}$ is  given by 
\be\label{krein}
(-\Delta_{\alpha,Y}+\zeta)^{-1}=(-\Delta+\zeta)^{-1}+K_{\zeta}\,,\qquad \qquad \zeta\not\in\sigma(\Delta_{\alpha,Y})\,,
\ee
where $\sigma(\Delta_{\alpha,Y})$ denotes the spectrum of $\Delta_{\alpha,Y}$, 
$$
(-\Delta+\zeta)^{-1}:L^{2}(\RE^{3})\to H^{2}(\RE^{3})\,,\qquad \zeta\in\CO\backslash(-\infty,0]\,,
$$
is the resolvent of the free Laplacian with kernel function
$$
(-\Delta+\zeta)^{-1}(x,y)=\frac1{4\pi}\,\frac{e^{-\sqrt \zeta\,|x-y|}}{|x-y|}\,,\qquad\text{Re}(\sqrt \zeta\,)>0\,,
$$
and the finite-rank operator $K_{\zeta}:L^{2}(\RE^{3})\to L^{2}(\RE^{3})$ has kernel function 
$$
K_{\zeta}(x,y)=\frac1{(4\pi)^{2}}\sum_{1\le i,j\le n} \Lambda_{\zeta}^{ij}\,\frac{e^{-\sqrt \zeta\,|x-y_{i}|}}{|x-y_{i}|}\,
\frac{e^{-\sqrt \zeta\,|y-y_{j}|}}{|y-y_{j}|}\,.
$$
Here $\Lambda_{\zeta}\equiv(\Lambda^{ij}_{\zeta})$ is the inverse of the $n\times n$ matrix $M_{\zeta}\equiv(M^{ij}_{\zeta})$
given by 
\be\label{matrix M}
M^{ij}_{\zeta}=\left(\alpha_{i}+\frac{\sqrt \zeta}{4\pi}\,\right)\delta_{ij}-
\frac{e^{-\sqrt \zeta\,|y_{i}-y_{j}|}}{4\pi\,|y_{i}-y_{j}|}\ (1-\delta_{ij})\,,
\ee
$\delta_{ij}$ denoting Kronecker's delta.
Regarding the spectral profile, since the resolvent of $\Delta_{\alpha,Y}$ is a $n$-rank perturbation of the free resolvent, the essential spectrum of the free Laplacian is preserved and the discrete spectrum contains at most $n$ distinct eigenvalues; in more detail 
$$
\sigma_{\text{ac}}(\Delta_{\alpha,Y})=\sigma_{\text{ess}}(\Delta_{\alpha,Y})=(-\infty,0]\,,\qquad 
\sigma_{\text{disc}}(\Delta_{\alpha,Y})=\{\lambda>0:\det M_{\lambda}=0\}\,.
$$
For later purposes, we need to investigate the positiveness of the matrix in \eqref{matrix M}:
\begin{lemma}\label{form} Let the matrix $M_{\lambda}\equiv(M^{ij}_{\lambda})$ be defined by \eqref{matrix M} with $\zeta=\lambda\in (0,+\infty)\,$. Then,
$$\text{$M_{\lambda}$ is positive-definite}\quad\iff\quad\lambda>\lambda_{\alpha,Y}:=\sup\sigma(\Delta_{\alpha,Y})\,.
 $$
\end{lemma}
\begin{proof} Let  $\lambda>0$ and $$
v_{\lambda}^{\xi}(x):=\frac1{4\pi}\,\sum_{j=1}^{n}\xi_{j}\,\frac{e^{-\sqrt \lambda\,|x-y_j|}}{|x-y_j|}\,,\quad \xi\equiv(\xi_{1},\dots,\xi_{n})\in\CO^{n}\,.
$$
By \cite[Section 2]{teta}, the quadratic form $Q_{\alpha,Y}$ of $-\Delta_{\alpha,Y}$ has the $\lambda$-independent representation 
\begin{align*}
\dom(Q_{\alpha,Y})=&\{u\in L^{2}(\RE^{3}):u=u_{\lambda}+v_{\lambda}^{\xi}
\,,\ u_{\lambda}\in H^{1}(\RE^{3})\,,\ \xi\in\CO^{n}\}\,,
\\
Q_{\alpha,Y}(u)=&\|\nabla u_{\lambda}\|^{2}_{L^{2}}+\lambda\,\| u_{\lambda}\|^{2}_{L^{2}}-\lambda\,\| u\|^{2}_{L^{2}}+\langle\xi,M_{\lambda}\xi\rangle\,.
\end{align*}
If $\lambda>\lambda_{\alpha,Y}$ then, for any $\xi\not=0$, 
$$
\langle\xi,M_{\lambda}\xi\rangle=Q_{\alpha,Y}(v_{\lambda}^{\xi})+\lambda\,\|v_{\lambda}^{\xi}\|^{2}_{L^{2}}>0\,.
$$
Conversely, let $\lambda>0$ be such that $M_{\lambda}$ is positive-definite and, given any $u\in \dom(Q_{\alpha,Y})\backslash\{0\}$, let use the decomposition $u=u_{\lambda}+v_{\lambda}^{\xi}$. Then
\begin{align}\label{QF}
Q_{\alpha,Y}(u)+\lambda\,\| u\|^{2}_{L^{2}}=&\,\|\nabla u_{\lambda}\|^{2}_{L^{2}}+\lambda\,\| u_{\lambda}\|^{2}_{L^{2}}+\langle\xi,M_{\lambda}\xi\rangle\nonumber\\
\ge&\begin{cases} \langle\xi,M_{\lambda}\xi\rangle\,,&\xi\not=0\\
\|\nabla u_{\lambda}\|^{2}_{L^{2}}+\lambda\,\| u_{\lambda}\|^{2}_{L^{2}}\,,&\xi=0\end{cases}\\
>&\,0\nonumber
\end{align}
and so $\lambda>\lambda_{\alpha,Y}$.
\end{proof}
Obviously, whenever $Y$ is the singleton $Y=\{y\}$ one has
$$
\lambda_{\alpha,y}=\begin{cases}0&\alpha\ge 0\\
(4\pi\alpha)^{2}&\alpha<0\,.\end{cases}$$
The next result provides a simple rough estimate on $\lambda_{\alpha,Y}$ whenever $n>1$.
\begin{lemma}\label{coerc} Set
$$
\alpha_{\circ}:=\underset{1\le i\le n}\min\alpha_{i}\,,
\qquad d:=\underset{i\not=j}\min\,|y_{i}-y_{j}|\,.
$$ 
Then
$$ 
0\le\lambda_{\alpha,Y}\le \begin{cases}
0\,,&4\pi \alpha_{\circ}d\ge n-1\,,\\
\lambda_{\circ}\,,&4\pi \alpha_{\circ}d< n-1\,,\end{cases}
$$
where  $\lambda_{\circ}>0$ solves
$$4\pi\alpha_{\circ} d+\sqrt{\lambda_{\circ}}\,d=(n-1)\, e^{-\sqrt{\lambda_{\circ}}\,d}\,.
$$ 
\end{lemma}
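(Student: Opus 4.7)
By Lemma \ref{form}, $\lambda_{\alpha,Y}$ coincides with the infimum of those $\lambda>0$ for which the symmetric real matrix $M_{\lambda}$ is positive-definite. Moreover, the estimate $\lambda_{\alpha,Y}\ge 0$ is automatic, since the essential spectrum of $\Delta_{\alpha,Y}$ equals $(-\infty,0]$, as recalled just before Lemma \ref{form}. Thus the whole task is to produce an explicit $\lambda_{\circ}\ge 0$ above which $M_{\lambda}$ is positive-definite.

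The natural tool is \emph{strict diagonal dominance}: a real symmetric matrix with strictly positive diagonal entries, each exceeding the sum of the absolute values of the corresponding off-diagonal entries, is positive-definite (its Gershgorin disks lie in the open right half-plane). Applied to
$$M^{ii}_{\lambda}=\alpha_{i}+\frac{\sqrt\lambda}{4\pi}\,,\qquad M^{ij}_{\lambda}=-\frac{e^{-\sqrt\lambda\,\|y_{i}-y_{j}\|}}{4\pi\,\|y_{i}-y_{j}\|}\quad(i\neq j),$$
the dominance condition becomes, after multiplying by $4\pi$,
$$4\pi\alpha_{i}+\sqrt\lambda\,>\,\sum_{j\neq i}\frac{e^{-\sqrt\lambda\,\|y_{i}-y_{j}\|}}{\|y_{i}-y_{j}\|}\qquad\text{for every }i=1,\dots,n.$$

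To remove the dependence on $i$, one bounds from below $4\pi\alpha_{i}\ge 4\pi\alpha_{\circ}$, and from above each term on the right using the fact that $r\mapsto e^{-\sqrt\lambda\,r}/r$ is strictly decreasing on $(0,+\infty)$, together with $\|y_{i}-y_{j}\|\ge d$; this yields the single sufficient condition
$$4\pi\alpha_{\circ}\,d+\sqrt\lambda\,d\,>\,(n-1)\,e^{-\sqrt\lambda\,d}.$$
The left-hand side is strictly increasing, the right-hand side strictly decreasing in $\sqrt\lambda\ge 0$, so the inequality is governed by its value at $\lambda=0$: if $4\pi\alpha_{\circ}\,d\ge n-1$ it holds for every $\lambda>0$, hence $\lambda_{\alpha,Y}\le 0$ and therefore $\lambda_{\alpha,Y}=0$; if $4\pi\alpha_{\circ}\,d<n-1$ there is a unique $\lambda_{\circ}>0$ at which equality is attained, and the inequality is strict precisely for $\lambda>\lambda_{\circ}$, giving $\lambda_{\alpha,Y}\le\lambda_{\circ}$.

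The only subtlety is the passage from the $i$-dependent dominance condition to the uniform one: it requires the monotonicity of $r\mapsto e^{-\sqrt\lambda\,r}/r$, which makes the minimal inter-point distance $d$ the correct quantity to use, and the observation that, under the resulting strict dominance, the diagonal entries $\alpha_{i}+\sqrt\lambda/(4\pi)$ are automatically positive (being larger than a non-negative quantity). Beyond this, the argument is essentially a Gershgorin-type estimate and no further technicality intervenes.
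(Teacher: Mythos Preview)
Your proof is correct and follows essentially the same route as the paper: both bound the off-diagonal contribution of $M_{\lambda}$ uniformly using $\alpha_{\circ}$, the minimal distance $d$, and the count $n-1$, arriving at the identical sufficient condition $4\pi\alpha_{\circ}d+\sqrt{\lambda}\,d>(n-1)e^{-\sqrt{\lambda}\,d}$. The only cosmetic difference is that you invoke strict diagonal dominance (Gershgorin), whereas the paper estimates the quadratic form $\langle\xi,M_{\lambda}\xi\rangle$ directly via $2\sum_{j<k}|\xi_{j}||\xi_{k}|\le (n-1)\|\xi\|^{2}$; these are two phrasings of the same inequality.
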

\begin{proof} The thesis is consequence of \eqref{QF} and the inequality
\begin{align*}
\langle\xi,M_{\lambda}\xi\rangle=&\sum_{j=1}^{n}\left(\alpha_{j}+\frac{\sqrt\lambda}{4\pi}\right)|\xi_{j}|^{2}-\sum_{j<k}
\frac{e^{-\sqrt\lambda\,|y_{j}-y_{k}|}}{4\pi\,|y_{j}-y_{k}|}\ 2\,\text{Re}(\bar\xi_{j}\xi_{k})\\
\ge & \left(\alpha_\circ+\frac{\sqrt\lambda}{4\pi}\right) |\xi|^{2}-\frac{e^{-\sqrt\lambda\,d}}{4\pi\,d}\sum_{j<k}2\,|\xi_{j}|\,|\xi_{k}|\\
\ge & \left(\alpha_\circ+\frac{\sqrt\lambda}{4\pi}-(n-1)\ \frac{e^{-\sqrt\lambda\,d}}{4\pi\,d}\right) |\xi|^{2}\,.
\end{align*}
\end{proof}
\section{Wave scattering and the data operator.}\label{S3} 
\subsection{Abstract wave equations.} Let  $A:\dom(A)\subset L^{2}(\RE^{3})\to L^{2}(\RE^{3})$ be self-adjoint and bounded from above; we consider the Cauchy problem for the corresponding wave equation
\be
\begin{cases}\label{Cpb}
\partial_{tt} u(t)=Au(t)\\
u(0)=u_{0}\in L^{2}(\RE^{3})\\
\partial_{t} u(0)=v_{0}\in L^{2}(\RE^{3})\,.
\end{cases}
\ee
We say that $u\in {\mathscr C}(\RE_{+};L^{2}(\RE^{3}))$ is a mild solution of \eqref{Cpb} whenever, 
for any $t\ge 0$, there holds
$$
\int_{0}^{t}(t-s)u(s)\,ds\,\in\dom(A)\quad\text{and}\quad
u(t)=u_{0}+tv_{0}+A\int_{0}^{t}(t-s)u(s)\,ds\,.
$$
By \cite[Proposition 3.14.4, Corollary 3.14.8 and Example 3.14.16]{A}, the unique mild solution of \eqref{Cpb} is given by 
\be\label{sol}
u(t)=\mbox{Cos}_{A}(t)\,u_{0}+\mbox{Sin}_{A}(t)\,v_{0}
\ee
where the $\bou(L^{2}(\RE^{3}))$-valued  functions $t\mapsto\mbox{Cos}_{A}(t)$ and $t\mapsto\mbox{Sin}_{A}(t)$ are defined through the $\bou(L^{2}(\RE^{3}))$-valued (inverse) Laplace transform by the relations  
\be\label{cos}
\sqrt\lambda\,(-A+\lambda)^{-1}=\int_{0}^{\infty}e^{-\sqrt\lambda\, t}\,\mbox{Cos}_{A}(t)\,dt\,,\qquad \lambda>\lambda_{A}\,,
\ee
\be\label{sin}
(-A+\lambda)^{-1}=\int_{0}^{\infty}e^{-\sqrt\lambda\, t}\,\mbox{Sin}_{A}(t)\,dt\,,
\qquad\lambda>\lambda_{A}\,,
\ee
with $$\lambda_{A}:=\sup\sigma(A)\,.
$$ 
Notice that (see \cite[relation (3.93)]{A})
\be\label{SC}
\mbox{Sin}_{A}(t)=\int_{0}^{t}\mbox{Cos}_{A}(s)\,ds\,.
\ee
If $\lambda_{A}=0$, then, by functional calculus,
$$
\mbox{Cos}_{A}(t)=\cos(t(-A)^{1/2})\,,\qquad \mbox{Sin}_{A}(t)=
(-A)^{-1/2}\sin(t(-A)^{1/2})\,.
$$ 
Given $\chi\in L^{1}(0,+\infty)$ 
and given $g\in L^{2}(\RE^{3})$,  let $u^{A}_{\chi g}$ be the solution of the wave equation with the source $\chi g$, i.e., 
\be\label{Cpb-inh}
\begin{cases}
\partial_{tt} u^{A}_{\chi g}(t)=A\,u^{A}_{\chi g}(t)+\chi(t)g\\
u^{A}_{\chi g}(0)=0\\
\partial_{t} u^{A}_{\chi g}(0)=0\,.
\end{cases}
\ee
By \cite[Proposition 3.1.16]{A} (see also \cite[Section II.4]{Fatto}), 
\be\label{Sin}
u^{A}_{\chi g}(t)=\int_{0}^{t}\mbox{Sin}_{A}(t-s)\chi(s)g\,ds
\,.
\ee
Let $\chi_{\tau}\in L^{1}(0,+\infty)$ be an approximation of Dirac's delta distribution at $t=0$, i.e.,
\be\label{chi}
\chi_{\tau}(t)\ge 0\,,\qquad \int_{0}^{+\infty}\!\!\!\chi_{\tau}(s)\,ds=1\,,\qquad 
\lim_{\tau\searrow 0}\int_{0}^{+\infty}\!\!\!s\chi_{\tau}(s)\,ds =0\,.
\ee
Two common choices are $\chi_{\tau}(t)=\frac1\tau\,1_{[0,\tau]}(t)$ and  
$\chi_{\tau}(t)=\frac1\tau\,e^{-t/\tau}$.\par
Let $u^{A}_{g}(t)$ be the solution of the homogenous  Cauchy problem
\be\label{Cpb-h}
\begin{cases}
\partial_{tt} u^{A}_{g}(t)=A\,u^{A}_{g}(t)\\
u^{A}_{g}(0)=0\\
\partial_{t} u^{A}_{g}(0)=g\,.
\end{cases}
\ee
By \eqref{sol}, \eqref{Sin}, \eqref{SC} and hypotheses \eqref{chi}, one gets
\begin{align*}
&\lim_{\tau\searrow 0}\big\|u^{A}_{g}(t)-u^{A}_{\chi_{\tau} g}(t)\big\|_{L^{2}}=
\lim_{\tau\searrow 0}\Big\|\,\mbox{Sin}_{A}(t)g-\int_{0}^{t}\mbox{Sin}_{A}(t-s)\chi_{\tau}(s)g\,ds\,\Big\|_{L^{2}}\\
\le&\lim_{\tau\searrow 0}\left(\,\int_{0}^{t}\bigg\|\bigg(\int_{t-s}^{t}\mbox{Cos}_{A}(r)\,dr\bigg)\chi_{\tau}(s)g\,\bigg\|_{L^{2}}\!ds+\int_{t}^{+\infty}\big\|\mbox{Sin}_{A}(t)\chi_{\tau}(s)g\,\big\|_{L^{2}}\,ds\right)\\
\le&\lim_{\tau\searrow 0}\left(\,\sup_{0\le r\le t}\|\mbox{Cos}_{A}(r)\|_{L^{2}\!,L^{2}}\|g\|_{L^{2}}\int_{0}^{t}s\chi_{\tau}(s)\,ds +\|\mbox{Sin}_{A}(t)\|_{L^{2}\!,L^{2}}\|g\|_{L^{2}}\int_{t}^{+\infty}\!\!\!\chi_{\tau}(s)\,ds\right)\\
\le &\,c\lim_{\tau\searrow 0}\int_{0}^{+\infty}\!\!\! s\chi_{\tau}(s)\,ds=0\,.
\end{align*}
Hence, the $u^{A}_{g}(t)$ solving \eqref{Cpb-h} can be interpreted  as the solution of the inhomogeneous Cauchy problem
\be\label{Cpb-d}
\begin{cases}
\partial_{tt} u^{A}_{g}(t)=A\,u^{A}_{g}(t)+\delta_{0}(t)g\\
u^{A}_{g}(0)=0\\
\partial_{t} u^{A}_{g}(0)=0\,.
\end{cases}
\ee 
\begin{remark}\label{WD} By \eqref{sin}, if $A_{n}$ converges to $A$ in strong resolvent sense as $n\nearrow+\infty$, then
$$
\lim_{n\nearrow+\infty}\int_{0}^{\infty}e^{-\sqrt\lambda\, t}u^{A_{n}}_{g}\,dt=
\int_{0}^{\infty}e^{-\sqrt\lambda\, t}u^{A}_{g}(t)\,dt\,.
$$
\end{remark}
\subsection{The data operator with point scatterers.} Let 
$u^{\alpha,Y}_{\chi_{\tau} g}$ and $u^{\varnothing}_{\chi_{\tau} g}$ denote the solutions of the Cauchy problems 
$$
\begin{cases}
\partial_{tt} u^{\alpha,Y}_{\chi_{\tau} g}(t,x)=\Delta_{\alpha,Y}\,u^{\alpha,Y}_{\chi_{\tau} g}(t,x)+\chi_{\tau}(t)g(x)\\
u^{\alpha,Y}_{\chi_{\tau} g}(0,x)=0\\
\partial_{t} u^{\alpha,Y}_{\chi_{\tau} g}(0,x)=0\,,
\end{cases}
\quad 
\begin{cases}
\partial_{tt} u^{\varnothing}_{\chi_{\tau} g}(t,x)=\Delta\,u^{\varnothing}_{\chi_{\tau} g}(t,x)+\chi_{\tau}(t)g(x)\\
u^{\varnothing}_{\chi_{\tau} g}(0,x)=0\\
\partial_{t} u^{\varnothing}_{\chi_{\tau} g}(0,x)=0\,.
\end{cases}
$$
With respect to the previous subsection, here we use the notations 
$$
u^{\alpha,Y}_{(...)}\equiv u^{\Delta_{\alpha,Y}}_{(...)}\,,\qquad 
u^{\varnothing}_{(...)}\equiv u^{\Delta}_{(...)}\,.
$$
In typical scattering experiments one measures the scattered wave 
\be\label{sw}
u^{\alpha,Y}_{\chi_{\tau} g}(t,x)-u^{\varnothing}_{\chi_{\tau} g}(t,x)
\ee
produced by a sharp pulse $\chi_{\tau}g$, $\tau\ll 1$. By the previous discussion leading to \eqref{Cpb-d} (equivalently to \eqref{Cpb-h}), in the ideal experiment in which the pulse is concentrated at $t=0$, \eqref{sw} is replaced by 
$$
u^{\alpha,Y}_{g}(t,x)-u^{\varnothing}_{g}(t,x)
\,,
$$
where $u^{\alpha,Y}_{g}$ and $u^{\varnothing}_{g}$ solve 
\be\label{CP2}
\begin{cases}
\partial_{tt} u^{\alpha,Y}_{ g}(t,x)=\Delta_{\alpha,Y}\,u^{\alpha,Y}_{g}(t,x)\\
u^{\alpha,Y}_{g}(0,x)=0\\
\partial_{t} u^{\alpha,Y}_{g}(0,x)=g(x)\,,
\end{cases}
\qquad 
\begin{cases}
\partial_{tt} u^{\varnothing}_{g}(t,x)=\Delta\,u^{\varnothing}_{g}(t,x)\\
u^{\varnothing}_{g}(0,x)=0\\
\partial_{t} u^{\varnothing}_{g}(0,x)=g(x)\,.
\end{cases}
\ee
\par 
Considering then an array of points $X_{N}=\{x_{1},\dots,x_{N}\}\subset\RE^{3}\backslash Y$, we now assume  $g=f_{\epsilon}$  supported in a $\epsilon$-neighborhood of $X_{N}$, where 
\be\label{FE}
f_{\epsilon}(x):=\sum_{k=1}^{N}f_{k}\,\varphi_{\epsilon}(x-x_{k})\,,\qquad f\equiv(f_{1},\dots f_{N})\,,
\ee
\be\label{ac}
\varphi_{\epsilon}(x)=\frac1{\epsilon^{3}}\,\varphi\left(\frac{x}{\epsilon}\right)\,,\quad
 \varphi\in{\mathscr C}^{\infty}_{\comp}(\RE^{3})\,,\quad \int_{\RE^{3}}\varphi(x)\,dx=1\,.
\ee
By \eqref{FE} and \eqref{ac}, $f_{\epsilon}$ converges, in distributional sense, to $\sum_{k=1}^{N}f_{k}\,\delta_{x_{k}}$ as $\epsilon\searrow 0$.
Let us introduce the operator 
$$
F^{N,\epsilon}_{\lambda}:\RE^{N}\to\RE^{N}\,,\qquad  (F^{N,\epsilon}_{\lambda}f)_{k}:=
\int_{0}^{\infty}e^{-\sqrt\lambda\,t}
(u^{\alpha,Y}_{f_{\epsilon}}(t,x_{k})-u^{\varnothing}_{f_{\epsilon}}(t,x_{k}))\,dt\,,\quad \lambda>\lambda_{\alpha,Y}\,.
$$
$F^{N,\epsilon}_{\lambda}$ corresponds to the measurements at time $t$ and at points $x_{1},\dots,x_{N}$ of the scattered waves produced by pulses supported at $t=0$ and in tiny (whenever $\epsilon\ll 1$) neighborhoods  of the same points $x_{1},\dots,x_{N}$ (detectors and emitters are at the same places). 
\begin{remark}
Since $f_{\epsilon}\in{\mathscr C}^{\infty}_{\comp}(\RE^{3})$ belongs to the form domain of $\Delta_{\alpha,Y}$ (that is to $\dom(Q_{\alpha,Y})$ as defined in the proof of Lemma \ref{form}), the solution $u^{\alpha,Y}_{f_{\epsilon}}(t,\cdot)$ entering in the definition of $F^{N,\epsilon}_{\lambda}$ is a strong one (see e.g. \cite[Chapter
2, section 7]{Gold}), i.e.,  $$u^{\alpha,Y}_{f_{\epsilon}}\in {\mathscr C}(\RE_{+},\dom(\Delta_{\alpha,Y}))\cap {\mathscr C^{1}}(\RE_{+},\dom(Q_{\alpha,Y}))\cap {\mathscr C^{2}}(\RE_{+},L^{2}(\RE^{3}))\,.
$$ 
Since $\dom(\Delta_{\alpha,Y})\subset {\mathscr C}(\RE^{3}\backslash Y)$, the evaluation at the point $x_{k}$ in $(F^{N,\epsilon}_{\lambda}f)_{k}$ is a legitimate operation.
\end{remark}
Next we show that $F^{N,\epsilon}_{\lambda}$ admits  a well defined limit as $\epsilon\searrow 0$, so that one is allowed to consider  the case in which the $N$ emitters and the $N$ detectors are both placed at the points $x_{1},\dots,x_{N}$.
\begin{lemma}\label{LF} Let $u^{\alpha,Y}_{f_{\epsilon}}$ and $u^{\varnothing}_{f_{\epsilon}}$ be the solutions of the Cauchy problems in \eqref{CP2}, where $g$ equals $f_{\epsilon}$ defined in \eqref{FE}. Then, the limits
$$
\lim_{\epsilon\searrow 0}\,\big(F^{N,\epsilon}_{\lambda}f\big)_{k}=\lim_{\epsilon\searrow 0}
\int_{0}^{\infty}e^{-\sqrt\lambda\,t}\big(u^{\alpha,Y}_{f_{\epsilon}}(t,x_{k})-u^{\varnothing}_{f_{\epsilon}}(t,x_{k})\big)\,dt\,,\qquad k=1,\dots, N\,,
$$
exist.
\end{lemma}
\begin{proof}
Since $ u^{\alpha,Y}_{f_{\epsilon}}$ and $u^{\varnothing}_{f_{\epsilon}}$ solve \eqref{CP2} with $g=f_{\epsilon}$, by \eqref{sol}, \eqref{sin} and the resolvent formula \eqref{krein}, one obtains
\begin{align*}
\lim_{\epsilon\searrow 0}\,(F^{N,\epsilon}_{\lambda}f)_{k}=&\lim_{\epsilon\searrow 0}
\int_{0}^{\infty}e^{-\sqrt\lambda\,t}\big(u^{\alpha,Y}_{f_{\epsilon}}(t,x_{k})-u^{\varnothing}_{f_{\epsilon}}(t,x_{k})\big)\,dt\nonumber\\
=&\lim_{\epsilon\searrow 0}\int_{0}^{\infty}e^{-\sqrt\lambda\,t}\big(\mbox{Sin}_{\Delta_{\alpha,Y}}(t)f_{\epsilon})(x_{k})-((-\Delta)^{-1/2}\sin(t(-\Delta)^{1/2})f_{\epsilon}\big)(x_{k})\,dt\nonumber\\
=&\lim_{\epsilon\searrow 0}\big((-\Delta_{\alpha,Y}+\lambda)^{-1}f_{\epsilon}-(-\Delta+\lambda)^{-1}f_{\epsilon}\big)(x_{k})\nonumber\\
=&\frac1{(4\pi)^{2}}\sum_{i,j=1}^{n}\Lambda_{\lambda}^{ij}\ \frac{e^{-\sqrt\lambda\,|x_{k}-y_{i}|}}{|x_{k}-y_{i}|}\,\left(\lim_{\epsilon\searrow 0}\int_{\RE^{3}}\frac{e^{-\sqrt\lambda\,|x-y_{j}|}}{|x-y_{j}|}\,f_{\epsilon}(x)\,dx\right)
\end{align*}
By \eqref{FE} and \eqref{ac},
\begin{align}\label{data1}
\lim_{\epsilon\searrow 0}\,(F^{N,\epsilon}_{\lambda}f)_{k}=&\lim_{\epsilon\searrow 0}
\int_{0}^{\infty}e^{-\sqrt\lambda\,t}\big(u^{\alpha,Y}_{f_{\epsilon}}(t,x_{k})-u^{\varnothing}_{f_{\epsilon}}(t,x_{k})\big)\,dt\nonumber\\
=&
\frac1{(4\pi)^{2}}\sum_{i,j=1}^{n}\Lambda_{\lambda}^{ij}\ \frac{e^{-\sqrt\lambda\,|x_{k}-y_{i}|}}{|x_{k}-y_{i}|}\,\sum_{\ell=1}^{N}\frac{e^{-\sqrt\lambda\,|x_{\ell}-y_{j}|}}{|x_{\ell}-y_{j}|}\,f_{\ell}\,.
\end{align}
\end{proof}
\subsection{A convenient representation of the scattered waves.}\label{MS} In order to implement numerical tests, it is useful to have at disposal an explicit formula for the difference of the solutions of the two Cauchy problems \eqref{CP2}, providing a convenient representation of the scattered waves entering in the definition of  $F^{N,\epsilon}_{\lambda}$.  By \cite[Theorem 3]{KP} (see also \cite[Theorem 3.1]{NP05}, and, for the case of a single point scatterer, the antecedent result in \cite[Theorem 3.2]{NP98}) $u^{\alpha,Y}_{f_{\epsilon}}(t)$ can be written in terms of $u^{\varnothing}_{f_{\epsilon}}(t)$ and of the solution of a system of inhomogeneous retarded
first-order differential equations.\par 
More precisely, if $q_{\epsilon}(t)\equiv(q_{\epsilon}^{1}(t),\dots, q^{n}_{\epsilon}(t))$, $t\ge 0$, denotes the unique solution of the Cauchy problem (here the dot in $\dot q_{\epsilon}^{j}(t)$ denotes the time-derivative and $H$ is Heaviside's function)
\be\label{ret}
\begin{cases}
\frac1{4\pi}\,\dot q_{\epsilon}^{j}(t)+
\alpha_{j}\,q_{\epsilon}^{j}(t)= u^{\varnothing}_{f_{\epsilon}}(t,y_{j})+
\sum_{i\not=j}\,\frac{H(t-|y_{i}-y_{j}|)}{4\pi\,|y_{i}-y_{j}|}\ q_{\epsilon}^{i}(t-|y_{i}-y_{j}|)
\,,\\
q_{\epsilon}^{j}(0)=0\,,\qquad j=1,\dots, n\,,
\end{cases}
\ee
then
\begin{equation}\label{cnv}
u^{\alpha,Y}_{f_{\epsilon}}(t,x)=u^{\varnothing}_{f_{\epsilon}}(t,x)+\sum_{j=1}^{n}\,\frac{H(t-|x-y_{j}|)}{4\pi\,|x-y_{j}|}\ q_{\epsilon}^{j}(t-|x-y_{j}|)\,, 
\end{equation}
i.e., $u^{\alpha,Y}_{f_{\epsilon}}(t)$ coincides with the solution $u_{\epsilon}(t)$, of the inhomogeneous (distributional) Cauchy problem
\be\label{distr}
\begin{cases}
\partial_{tt} u_{\epsilon}(t)=\Delta  u_{\epsilon}(t)+\sum_{j=1}^{n} q_{\epsilon}^{j}(t)\,\delta_{y_{j}}+\delta_{0}(t)f_{\epsilon}
\\
u_{\epsilon}(0)=0\\
\partial_{t} u_{\epsilon}(0)=0\,,
\end{cases}
\ee
where the $q_{\epsilon}^{j}(t)$'s solve \eqref{ret}. Notice  that the retarded terms in \eqref{ret} take into account the multiple scattering effects.\par
In conclusion,
$$
u^{\alpha,Y}_{f_{\epsilon}}(t,x_{k})-u^{\varnothing}_{f_{\epsilon}}(t,x_{k})=\sum_{j=1}^{n}\,\frac{H(t-|x_{k}-y_{j}|)}{4\pi\,|x_{k}-y_{j}|}\ q_{\epsilon}^{j}(t-|x_{k}-y_{j}|)\,,
$$
where the $q_{\epsilon}^{j}(t)$'s solve \eqref{ret}.
\section{Inverse wave scattering in the time domain.}
Let us fix $\lambda>\lambda_{\alpha,Y}$, a compact set $K\supset Y$ and a denumerable set $$
D=\{x_{k},\ k\in{\mathbb N}\}\subset K\backslash Y\,;
$$
$D$ represents the points where the emitters/detectors can be placed. We introduce the following hypothesis regarding $D\,$:
\be\label{(H)}
\text{the closure of $D$ contains a not void open set.}
\ee
For any integer $N>0$, we define the map 
\be\label{phi}
\phi_{\lambda}^{N}:K\backslash D\to \RE^{N}\,,\quad
\phi_{\lambda}^{N}(z)\equiv\begin{bmatrix}
\frac{e^{-\sqrt\lambda\,|x_{1}-z|}}{|x_{1}-z|}\\
\vdots\\
\frac{e^{-\sqrt\lambda\,|x_{N}-z|}}{|x_{N}-z|}
\end{bmatrix}
\ee
and the linear operator
\be\label{Phi}
\Phi^{N}_{\lambda}:\RE^{n}\to \RE^{N}\,,
\qquad\Phi_{\lambda}^{N}\equiv\begin{bmatrix}\frac{e^{-\sqrt\lambda\,|x_{1}-y_{1}|}}{|x_{1}-y_{1}|}&\dots&\frac{e^{-\sqrt\lambda\,|x_{1}-y_{n}|}}{|x_{1}-y_{n}|}\\
\vdots&{\ }&\vdots\\
\frac{e^{-\sqrt\lambda\,|x_{N}-y_{1}|}}{|x_{N}-y_{1}|}&\dots&\frac{e^{-\sqrt\lambda\,|x_{N}-y_{n}|}}{|x_{N}-y_{n}|}\end{bmatrix}.
\ee
\begin{lemma}\label{lemma} Let $\phi_{\lambda}^{N}$ and $\Phi^{N}_{\lambda}$ be the maps defined in \eqref{phi} and \eqref{Phi}. Then, there exists $N_{\circ}\ge 1$ such that, for any $N\ge N_{\circ}$,
$$
z\in Y\quad\iff\quad \phi_{\lambda}^{N}(z)\in\ran(\Phi^{N}_{\lambda})\,.
$$
\end{lemma}
\begin{proof} ($\Rightarrow$) If $z=y_{k}\in Y$, then $\phi_{\lambda}^{N}(z)=\Phi^{N}_{\lambda}\xi$, where $\xi\equiv(\xi_{1},\dots,\xi_{n})$, 
$\xi_{j}=\delta_{kj}$. \par 
($\Leftarrow$) Here we mimic the arguments provided in the proof of \cite[Theorem 4.1]{KG}. Suppose that the implication is false, i.e., 
$$
\text{$\forall N\ge 1$, $\exists z_{N}\in K\backslash Y$ such that $\phi^{N}_{\lambda}(z_{N})\in\ran(\Phi^{N}_{\lambda})$.}
$$ 
Since $\Phi^{N}_{\lambda}=\left[\phi^{N}_{\lambda}(y_{1}),\dots,\phi^{N}_{\lambda}(y_{n})\right]$, 
$$\phi^{N}_{\lambda}(z_{N})\in\ran(\Phi^{N}_{\lambda})\quad\iff\quad \phi^{N}_{\lambda}(z_{N})\in \text{span}\!\left\{\phi^{N}_{\lambda}(y_{1}),\dots,\phi^{N}_{\lambda}(y_{n})\right\}\,.
$$
an so there would exist sequences 
$$
\{N_{\ell}\}_{\ell=1}^{\infty}\subset{\mathbb N}\,,\ N_{\ell}\nearrow+\infty\,,\quad\{\xi_{\ell}\}_{\ell=1}^{\infty}\subset \RE^{n}\,,\ \{\eta_{\ell}\}_{\ell=1}^{\infty}\subset\RE\,,\ 
|\xi_{\ell}|^{2}+|\eta_{\ell}|^{2}=1\,,\quad \{z_{\ell}\}_{\ell=1}^{\infty}\subset K\backslash Y\,,
$$
such that 
\be\label{LD}
\forall \ell\ge 1\,,\quad\sum_{j=1}^{n}(\xi_{\ell})_{j}\,\phi^{N_{\ell}}_{\lambda}(y_{j})=\eta_{\ell}\,\phi^{N_{\ell}}_{\lambda}(z_{\ell})\,.
\ee
Therefore the analytic functions
$$
v_{\ell}:\RE^{3}\backslash (Y\cup\{z_{\ell}\})\to\RE\,,\quad v_{\ell}(x):=\eta_{\ell}\,\frac{e^{-\sqrt\lambda\,|x-z_{\ell}|}}{|x-z_{\ell}|}-\sum_{j=1}^{n}(\xi_{\ell})_{j}\,\frac{e^{-\sqrt\lambda\,|x-y_j|}}{|x-y_j|} 
$$
would vanish on the set $D$ and so, by our hypothesis \eqref{(H)}, they would be identically zero. Hence
$$
\forall \ell\ge 1\,,\ \forall x\in \RE^{3}\backslash (Y\cup\{z_{\ell}\})\,,\quad
\eta_{\ell}\,\frac{e^{-\sqrt\lambda\,|x-z_{\ell}|}}{|x-z_{\ell}|}=\sum_{j=1}^{n}(\xi_{\ell})_{j}\,\frac{e^{-\sqrt\lambda\,|x-y_j|}}{|x-y_j|} 
\,.
$$
Since the sequences $\{\xi_{\ell}\}_{\ell=1}^{\infty}$, $\{\eta_{\ell}\}_{\ell=1}^{\infty}$ and 
$\{z_{\ell}\}_{\ell=1}^{\infty}$ are bounded, one has, eventually considering subsequences, 
$\xi_{\ell}\to\xi\in\RE^{n}$,   $\eta_{\ell}\to\eta\in\RE$, $z_{\ell}\to z\in K$ as $\ell\nearrow+\infty$ and so one would get 
\be\label{eq}
 \forall x\in \RE^{3}\backslash (Y\cup\{z\})\,,\quad
\eta\,\frac{e^{-\sqrt\lambda\,|x-z|}}{|x-z|}=\sum_{j=1}^{n}\xi_{j}\,\frac{e^{-\sqrt\lambda\,|x-y_j|}}{|x-y_j|} 
\,.
\ee
Let us now show that this is impossible, by considering separately the cases $z\in K\backslash Y$ and 
$z=y_{i}\in Y$. \par 
If $z\in K\backslash Y$ then, by considering the limit $x\to y_{k}$
in \eqref{eq}, one would get $\xi_{k}=0$ for any $k$ and hence $\eta=0$; this  is impossible, since $|\xi|^{2}+|\eta|^{2}=1$. \par
If $z=y_{i}\in Y$ then, by considering the limit $x\to y_{i}$
in \eqref{eq}, one would get $\xi_{i}=\eta$ and therefore
\be\label{eqk}
 \forall x\in \RE^{3}\backslash (Y\cup\{z\})\,,\qquad \sum_{j\not=i}\xi_{j}\,\frac{e^{-\sqrt\lambda\,|x-y_j|}}{|x-y_j|} =0
\,.
\ee
This would give $\xi_{j}=0$ for any $j\not=i$ and hence $|\eta|=\frac1{\sqrt 2}$.
Now, let us re-write \eqref{LD} as
\begin{align}\label{LD1}
\forall k\ge 1\,,\quad&\sum_{j\not=i}\frac{(\xi_{\ell})_{j}}{\varepsilon_{\ell}}\,
\frac{e^{-\sqrt\lambda\,|x_{k}-y_{j}|}}{|x_{k}-y_{j}|}
=\frac{\eta_{\ell}-(\xi_{\ell})_{k}}{\varepsilon_{\ell}}\,\frac{e^{-\sqrt\lambda\,|x_{k}-y_{i}|}}{|x_{k}-y_{i}|}\nonumber\\
&+\frac{\eta_{\ell}}{\varepsilon_{\ell}}\left( \frac{e^{-\sqrt\lambda\,|x_{k}-z_{\ell}|}}{|x_{k}-z_{\ell}|}-
\frac{e^{-\sqrt\lambda\,|x_{k}-y_{i}|}}{|x_{k}-y_{i}|}\right)\,,
\end{align}
where 
$$
\varepsilon_{\ell}:=\bigg(|\eta_{\ell}-(\xi_{\ell})_{i}|^{2}+\sum_{j\not=i}|(\xi_{\ell})_{j}|^{2}+|z_{\ell}-y_{i}|^{2}\bigg)^{1/2}\,.
$$
By
\begin{align*}
&\frac{e^{-\sqrt\lambda\,|x_{k}-z_{\ell}|}}{|x_{k}-z_{\ell}|}-
\frac{e^{-\sqrt\lambda\,|x_{k}-y_{i}|}}{|x_{k}-y_{i}|}\\
=&\left(\sqrt\lambda+\frac1{|x_{k}-y_{i}|}\right)
\frac{e^{-\sqrt\lambda\,|x_{k}-y_{i}|}}{|x_{k}-y_{i}|}\,
\frac{x_{k}-y_{i}}{|x_{k}-y_{i}|}\cdot(z_{\ell}-y_{i})+o(|y_{i}-z_{\ell}|)\,,
\end{align*}
and by (eventually considering subsequences)
$$
\frac{(\xi_{\ell})_{j}}{\varepsilon_{\ell}}\to \tilde\xi_{j}\,,\ j\not=i\,,\quad 
\frac{(\xi_{\ell})_{i}-\eta_{\ell}}{\varepsilon_{\ell}}\to \tilde\xi_{i}\,,\quad
\frac{z_{\ell}-y_{k}}{\varepsilon_{\ell}}\to \tilde z_{k}\,,\quad |\tilde\xi|^{2}+|\tilde z_{k}|^{2}=1\,,
$$
as $\ell\nearrow+\infty$, one would get, considering the limit $\ell\nearrow+\infty$ in the relation \eqref{LD1},
$$
\forall k\ge 1\,,\quad\sum_{j=1}^{n}\tilde\xi_{j}\,
\frac{e^{-\sqrt\lambda\,|x_{k}-y_{j}|}}{|x_{k}-y_{j}|}
=\eta\left(\sqrt\lambda+\frac1{|x_{k}-y_{i}|}\right)
\frac{e^{-\sqrt\lambda\,|x_{k}-y_{i}|}}{|x_{k}-y_{i}|}\,\frac{x_{k}-y_{i}}{|x_{k}-y_{i}|}\cdot \tilde z_{k}
$$
Again taking into account hypothesis \eqref{(H)} on the set $D$, one would obtain 
\be\label{wo}
\forall x\in \RE^{3}\backslash Y,\quad\sum_{j=1}^{n}\tilde\xi_{j}\ 
\frac{e^{-\sqrt\lambda\,|x-y_{j}|}}{|x-y_{j}|}
=\eta\left(\sqrt\lambda+\frac1{|x-y_{i}|}\right)
\frac{e^{-\sqrt\lambda\,|x-y_{i}|}}{|x-y_{i}|}\,\frac{x-y_{i}}{|x-y_{i}|}
\cdot \tilde z_{k}\,.
\ee
Considering the limits $x\to y_{j}$, $j\not=i$ in \eqref{wo}, one would get $\tilde\xi_{j}=0$ for any $j\not=i$ and so \eqref{wo} would reduce to 
\be\label{wo1}
\forall x\in \RE^{3}\backslash Y\,,\qquad\tilde\xi_{i}
=\eta\left(\sqrt\lambda+\frac1{|x-y_{i}|}\right)
\frac{x-y_{i}}{|x-y_{i}|}
\cdot \tilde z_{k}\,,
\ee
Considering the limit $x\to y_{i}$ in \eqref{wo1}, one would get $\tilde z_{k}=0$ and hence $\tilde\xi_{i}=0$. This is impossible, since  $|\tilde\xi_{i}|^{2}+|\tilde z_{k}|^{2}=1$. 
\end{proof}
\begin{theorem}\label{thm} Let  $D=\{x_{k},\ k\in{\mathbb N}\}\subset K\backslash Y$ satisfy hypothesis \eqref{(H)} and let $\lambda>\lambda_{\alpha,Y}$. Then there exists $N_{\circ}\ge 1$ such that for any $N\ge N_{\circ}$ the data operator corresponding to $X_{N}:=\{x_{k}\in D,\ k\le N\}$ defined by
$$
F^{N}_{\lambda}:\RE^{N}\to\RE^{N}\,,\quad (F^{N}_{\lambda}f)_{k}:=
\lim_{\epsilon\searrow 0}
\int_{0}^{\infty}e^{-\sqrt\lambda\,t}\big(u^{\alpha,Y}_{f_{\epsilon}}(t,x_{k})-u^{\varnothing}_{f_{\epsilon}}(t,x_{k})\big)\,dt
$$ 
determines $Y$ according to
$$z\in Y\quad\iff\quad \phi_{\lambda}^{N}(z)\in \ker(F^{N}_{\lambda})^{\perp}\,.
$$
Equivalently, denoting by $P^{N}_{\lambda}$ the orthogonal projector onto $\ker(F_{\lambda}^{N})$, one has
$$Y=\{\,\text{peak points of the function}\ 
z\mapsto|P^{N}_{\lambda}\phi_{\lambda}^{N}(z)|^{-1}\,\}\,.
$$
\end{theorem}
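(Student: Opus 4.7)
The plan is to reduce the theorem to the MUSIC-type Lemma \ref{lemma} via a symmetric factorization of the data operator. Reading off the explicit expression \eqref{data1} from Lemma \ref{LF}, the $(k,\ell)$-entry of $F^N_\lambda$ equals
$$\frac{1}{(4\pi)^{2}} \sum_{i,j=1}^{n} \Lambda_{\lambda}^{ij}\,\frac{e^{-\sqrt\lambda\,\|x_{k}-y_{i}\|}}{\|x_{k}-y_{i}\|}\,\frac{e^{-\sqrt\lambda\,\|x_{\ell}-y_{j}\|}}{\|x_{\ell}-y_{j}\|},$$
which is precisely the $(k,\ell)$-entry of the matrix product
$$F^N_\lambda=\frac{1}{(4\pi)^{2}}\,\Phi^N_\lambda\,\Lambda_\lambda\,(\Phi^N_\lambda)^{*}.$$
Thus $F^N_\lambda$ inherits a three-factor factorization through the $N\times n$ matrix $\Phi^N_\lambda$ and the inner symmetric $n\times n$ matrix $\Lambda_\lambda=M_\lambda^{-1}$.

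Next I invoke the hypothesis $\lambda>\lambda_{\alpha,Y}$: by Lemma \ref{form}, $M_\lambda$ is positive-definite, hence so is $\Lambda_\lambda$. Consequently, for every $f\in\RE^N$,
$$\langle F^N_\lambda f,f\rangle_{\RE^N}=\frac{1}{(4\pi)^{2}}\,\langle \Lambda_\lambda (\Phi^N_\lambda)^{*}f,(\Phi^N_\lambda)^{*}f\rangle_{\RE^n}\geq 0,$$
with equality if and only if $(\Phi^N_\lambda)^{*}f=0$. Since $F^N_\lambda$ is symmetric and positive semi-definite by the above, $F^N_\lambda f=0$ is equivalent to $\langle F^N_\lambda f,f\rangle=0$. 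Therefore $\ker(F^N_\lambda)=\ker((\Phi^N_\lambda)^{*})$, and taking orthogonal complements yields $\ker(F^N_\lambda)^{\perp}=\ran(\Phi^N_\lambda)$.

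Now I simply apply Lemma \ref{lemma}, which supplies an $N_\circ\geq 1$ such that for every $N\geq N_\circ$ one has $z\in Y \iff \phi^N_\lambda(z)\in\ran(\Phi^N_\lambda)$. Combined with the previous step, this gives the first equivalence $z\in Y\iff\phi^N_\lambda(z)\in\ker(F^N_\lambda)^{\perp}$. The peak-points reformulation is then immediate: $\phi^N_\lambda(z)\in\ker(F^N_\lambda)^{\perp}$ iff $P^N_\lambda\phi^N_\lambda(z)=0$ iff the scalar function $z\mapsto \|P^N_\lambda\phi^N_\lambda(z)\|^{-1}$ blows up at $z$, while for $z\notin Y$ the contrapositive direction of Lemma \ref{lemma} forces $\|P^N_\lambda\phi^N_\lambda(z)\|>0$ and hence finiteness of the inverse. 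The genuine difficulty of the result has already been absorbed into Lemma \ref{LF} (the vanishing of the $\epsilon\searrow 0$ limit and the factorization-ready form) and Lemma \ref{lemma} (the abstract MUSIC step); the present proof's only real ingredient is the positivity of $\Lambda_\lambda$ from the spectral hypothesis, which is what makes $\ker(F^N_\lambda)=\ker((\Phi^N_\lambda)^{*})$ hold.
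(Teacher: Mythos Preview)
Your proof is correct and follows essentially the same approach as the paper: factorize $F^N_\lambda=(4\pi)^{-2}\Phi^N_\lambda\Lambda_\lambda(\Phi^N_\lambda)^{*}$ via Lemma~\ref{LF}, use Lemma~\ref{form} to get positive-definiteness of $\Lambda_\lambda$, deduce $\ker(F^N_\lambda)^{\perp}=\ran(\Phi^N_\lambda)$, and conclude with Lemma~\ref{lemma}. The only cosmetic difference is that the paper reaches $\ker(F^N_\lambda)^{\perp}=\ran(\Phi^N_\lambda)$ by writing $F^N_\lambda=(4\pi)^{-2}(\Phi^N_\lambda\Lambda_\lambda^{1/2})(\Phi^N_\lambda\Lambda_\lambda^{1/2})^{*}$ and using $\ran(AA^{*})=\ran(A)$, whereas you use the equivalent quadratic-form argument $F^N_\lambda f=0\iff\langle F^N_\lambda f,f\rangle=0\iff(\Phi^N_\lambda)^{*}f=0$.
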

\begin{proof} By \eqref{data1}, one has
\be\label{data2}
F^{N}_{\lambda}=(4\pi)^{-2}\,\Phi^{N}_{\lambda} \Lambda_{\lambda}(\Phi^{N}_{\lambda})^{*} \,.
\ee
Since $M_{\lambda}$ is positive-definite by Lemma \eqref{form}, $\Lambda_{\lambda}=M_{\lambda}^{-1}$ is positive-definite as well and so it has a nonsingular square root. Hence
$$
F^{N}_{\lambda}=(4\pi)^{-2}\,\Phi^{N}_{\lambda} \Lambda^{1/2}_{\lambda}\big(\Phi^{N}_{\lambda} \Lambda^{1/2}_{\lambda}\big)^{*}
$$
and $$\ker(F^{N}_{\lambda})^{\perp}=\ran\big((F^{N}_{\lambda})^{*}\big)=\ran\big(\Phi^{N}_{\lambda} \Lambda^{1/2}_{\lambda}\big(\Phi^{N}_{\lambda} \Lambda^{1/2}_{\lambda}\big)^{*}\big)=\ran\big(\Phi^{N}_{\lambda} \Lambda^{1/2}_{\lambda}\big)=\ran(\Phi^{N}_{\lambda})\,.$$
The proof is then concluded by Lemma \ref{lemma}.
\end{proof}
\begin{remark} The arguments of proof of Lemma \ref{lemma}, being not constructive, do not provide any lower bound on the number  $N_{\circ}$ of emitters/detectors needed to locate the $n$ point in the array $Y$. Since the $N\times N$ data matrix $F^{N}_{\lambda}$ is symmetric, by elementary dimensional considerations, in order to determine the unknown $3n$ coordinates, one at least would need $N_{\circ}$ such that $(N_{\circ}^{2}+N_{\circ})/2\ge 3n$. This suggests the rough estimate
$$
N_{\circ}>\left\lfloor\frac{5\sqrt n-1}{2}\right\rfloor\,, 
$$  
where $\lfloor\cdot\rfloor$ denotes the integer part. At the best of our knowledge, 
a precise estimate of $N_{\circ}$ is a relevant open question which may require different techniques from the ones implemented here.  
\end{remark}
\begin{remark} Suppose that the locations $y_{1}, \dots,y_{n}$ of the scatterers have been determined.
Let us now work with $N=n$ sources/detectors. Thus the $n\times n$ matrix $\Phi_{\lambda}^{n}$ is known.\par 
Since the map $(0,+\infty)\ni\lambda\mapsto\det(\Phi_{\lambda}^{n})$ is analytic, it
has isolated zeroes; let us choose $\lambda >\lambda_{\alpha,Y}$ such that $\det(\Phi_{\lambda }^{n})\not=0$. Then, by \eqref{data1}, equivalently, by \eqref{data2},
$$
\Lambda_{\lambda }=M_{\lambda }^{-1}=(4\pi)^{2}\,(\Phi^{n}_{\lambda })^{-1}F^{n}_{\lambda }((\Phi^{n}_{\lambda })^{*})^{-1}\,.
$$
and
$$
\det(F^{n}_{\lambda })=(4\pi)^{-2n}(\det(\Phi^{n}_{\lambda }))^{2}\det(\Lambda_{\lambda })\not=0\,.
$$
Therefore, using \eqref{matrix M}, one can recover the coefficients $\alpha_{1},\dots,\alpha_{n}$ from the knowledge of $F_{\lambda}^{n}$ by the relations
$$
\alpha_{i}=\frac1{(4\pi)^{2}}\,\left((\Phi^{n}_{\lambda })^{*}(F^{n}_{\lambda })^{-1}\Phi^{n}_{\lambda} \right)_{ii}-\frac{\sqrt{\lambda }}{4\pi}\,.
$$
\end{remark}
\vskip20pt\noindent
{\bf Acknowledgements.} The authors thank Mourad Sini, Guanghui Hu and Ibtissem Ben A\"itcha for 
fruitful discussions. During the preparation of this work, A.M. was a member of the CNRS working group at WPI in Wien; he thanks the Pauli Institute for the hospitality and the support.  We also gratefully acknowledge one anonymous referee for the stimulating remarks.

\end{document}